\newtheorem{theorem}{\bf{Theorem}}
\newtheorem{proposition}{\bf{Proposition}}
\newtheorem{definition}{\bf{Definition}}
\begin{document}

\title{\LARGE{Progressive Rate-Filling: A Framework for Agile Construction of Multilevel Polar-Coded Modulation}}

\author{Jincheng~Dai, Jinnan~Piao, and Kai~Niu

\thanks{This work was supported by the National Science Foundation of China (No. 62001049), the National Key Research and Development
Program of China (No. 2018YFE0205501), China Post-Doctoral Science Foundation (No. 2019M660032), and China Mobile Research Institute.}
\thanks{The authors are with the Key Laboratory of Universal Wireless Communications, Ministry of Education, Beijing University of Posts and Telecommunications (BUPT), Beijing 100876, China (email: daijincheng@bupt.edu.cn, piaojinnan@bupt.edu.cn, niukai@bupt.edu.cn).}
\vspace{-2em}
}

\maketitle

\begin{abstract}
In this letter, we propose a progressive rate-filling method as a framework to
study agile construction of multilevel polar-coded modulation. We show that the bit indices within each component polar code can follow a fixed, precomputed ranking sequence, e.g., the Polar sequence in the 5G standard, while their allocated rates (i.e., the number of information bits of each component polar code) can be fast computed by exploiting the target sum-rate approximation and proper rate-filling methods. In particular, we develop two rate-filling strategies based on the capacity and the rate considering the finite block-length effect. The proposed construction methods can be performed independently of the actual channel condition with ${O\left(m\right)}$ ($m$ denotes the modulation order) complexity and robust to diverse modulation and coding schemes in the 5G standard, which is a desired feature for practical systems.
\end{abstract}

\begin{IEEEkeywords}
Polar-coded modulation, code construction, progressive rate-filling, target sum-rate approximation.
\end{IEEEkeywords}

\IEEEpeerreviewmaketitle

\section{Introduction}\label{section_introduction}

\IEEEPARstart{P}{olar} codes, invented by Ar{\i}kan \cite{arikan}, show competitive performance compared to state-of-the-art error-correcting codes for a wide
range of code lengths and have been adopted in 5G new radio (NR) \cite{5G_NR_std}. Shortly after polar codes were invented, the underlying channel polarization phenomenon has been found universal in many other signal processing problems \cite{Polar_coded_modulation_seidl,Polar_coded_MIMO_dai}. Particularly, in order to improve spectral efficiency, Seidl has introduced a $2^m$-ary polar-coded modulation scheme \cite{Polar_coded_modulation_seidl}. Taking the dependencies among the bits that are mapped to a modulation symbol as a special kind of channel transform, polar-coded modulation (PCM) is derived under the framework of two-stage channel polarization, i.e., the modulation partition and the binary partition. The whole structure of PCM follows the multilevel coding (MLC) scheme \cite{MLC_wacha}, however, the novel two-stage channel transform concatenated manner accommodates a joint design of polar coding and modulation which allows one to describe the two techniques in a unified context rather than a simple combination.

In practical wireless communication systems, adaptive modulation and coding (AMC) is one of the most critical link adaption techniques \cite{Tse,5G_NR_std_214}. AMC works by dynamically adjusting the modulation and coding scheme (MCS), including modulation order and coding rate, to align with the instantaneous channel quality, and thus the link throughput can be maximized \cite{MIESM}. Note that a critical issue in practical coded modulation systems is the code construction, which is expected to be agile and robust to diverse modulation and coding schemes. Similar to that of standard polar codes, the construction of an MLC-PCM system can also be implemented through density evolution (DE) \cite{DE_mori} or Gaussian approximation (GA) \cite{GA} algorithms. The reliabilities of all bit polarized sub-channels belonging to all component codes are calculated and sorted to select the information indices. Despite their high accuracy, these online construction methods usually rely on specific channel parameters and involve high computational complexity that scales linearly with the block-length. Worse still, these methods also introduce sorting complexity. Recently, the polarization weight (PW) sequence \cite{PW} provides a new idea to construct polar codes without dependence on transmission channels. Inspired by the PW sequence, the Polar sequence \cite{5G_NR_std} has been employed for polar code construction in the 5G standard. However, the PW/Polar sequence can only be used for index-ranking within each component code. To take advantage of the PW/Polar sequence in the MLC-PCM system, the first critical step is to accurately allocate the number of information bits for each component code. The related works and efficient solutions are still missing. Furthermore, the high computational and sorting complexities incurred by DE/GA online construction methods \cite{PCM_construction_wksp} depending on channel conditions are highly impractical for AMC systems with varying parameters. The expected construction scheme is \emph{universal}, in the sense that it can be performed on-the-fly and efficiently at ideal $O\left(1\right)$ cost as in 5G NR, also it should be robust to any channel parameter in the AMC system.

In this letter, we aim to set up a new framework for agile and robust construction of MLC-PCM. We analyze the properties of MLC-PCM structure and show that the bit indices within each component polar code can follow a fixed, precomputed ranking
sequence, e.g., the PW sequence or the Polar sequence in the 5G standard, while their allocated rates (i.e., the number of information bits of each component polar code) can be fast computed with the target sum-rate approximation and proper \emph{rate-filling} methods. The proposed code construction methods need only an one-shot operation with $O\left(m\right)$ complexity, which provides a answer to the MLC-PCM construction question of practical interest. In particular, we propose two rate-filling strategies. One is based on the bit channel capacities under the first-stage modulation partition, where the reference signal-to-noise-ratio (SNR) is autonomously obtained by finding out an equivalent channel whose capacity is equal to the target sum-rate of total $m$ component polar codes. By this means, the code construction is realized agilely for arbitrary MCS and independently of the actual channel condition. Also, it is proven to be capacity-achieving as the block-length goes to infinity. The other rate-filling strategy considers the rate with the finite block-length effect \cite{capacity_finite}, which can further improve the rate-filling accuracy by slightly increasing computational complexity. The proposed rate-filling methods are carried out in a progressive manner which ensures the summation of allocated information bits of all component polar codes strictly equal to the total number of information bits.

%Simulation results show that the proposed construction methods are robust to varying parameters and can achieve stable performance gain compared to the LDPC-coded modulation schemes in the 5G standard \cite{5G_NR_std}.

\emph{Notational Conventions:} In this letter, the calligraphic characters, such as ${\mathcal X}$, are used to denote sets. Let $\left| \mathcal X \right|$ denote the cardinality of the set $\mathcal X$. We write lowercase letters (e.g., $x$) to denote scalars. The notation ${v}_1^N$ denotes a $N$-dimensional row vector $\left( {{v_1},{v_2}, \cdots ,{v_N}} \right)$, we use ${v}_i^j$ to denote a subvector $\left( {{v_i},{v_{i+1}}, \cdots ,{v_j}} \right)$ of ${v}_1^N$, and $v_i$ denotes the $i$-th element in ${ v}_1^N$. The bold letters, such as ${\bf{X}}$, stand for matrices. Specially, for positive integer $N$, $\left[\kern-0.15em\left[ N \right]\kern-0.15em\right] \triangleq \left\{ {1,2, \cdots ,N} \right\}$. The function $\left\lceil \cdot \right\rceil$ denotes the ceil operation.

\section{Multilevel Polar-Coded Modulation}\label{section_pcm}

Let ${ W}: {\mathcal X} \to {\mathcal Y}$ denote a discrete memoryless channel (DMC) with input modulation symbol $x \in {\mathcal X}$ (alphabet size $\left| {\mathcal X} \right| = 2^m$), output symbols $y \in {\mathcal Y}$ from an arbitrary alphabet $\mathcal Y$, and the mutual information is marked as $I\left( {X;Y} \right)$. Given the $m$-bit sequence $b_1^m$ to modulation symbol mapping rule $\varphi : {\left\{ 0,1\right\}}^{m}  \mapsto {\mathcal X}$, the order-$m$ \emph{modulation partition (MP)} can be characterized as ${\varphi}: {W} \to \left( {{{W}_1},{{W}_2}, \cdots ,{{W}_m}} \right)$. It transforms $W$ to an ordered set of binary-input memoryless channels (BMCs) ${{W}_{{k}}}$ with $k=1,2,\cdots,m$, and they are referred as the \emph{bit subchannels}. Each ${W}_k$ of an $m$-MP is assumed to have the knowledge of the output of $W$ and the transmitted bits over the bit subchannels ${ W}_{k'}$ with smaller indices $k' = 1,2,\cdots,k-1$, i.e.,
\begin{equation}\label{trans_prob_asy_PCM}
\begin{aligned}
{{W}_k}\left( {y,b_1^{k - 1}\left| {{b_k}} \right.} \right) = \frac{1}{{{2^{m - 1}}}}  \sum\limits_{b_{k + 1}^m \in {{\left\{ {0,1} \right\}}^{m - k}}} { { {W}\left( {y\left| {x = \varphi\left( {b_1^m} \right)} \right.} \right)} }.
\end{aligned}
\end{equation}
Hence, from the perspective of mutual information, we have
\begin{equation}\label{mutual_information}
\sum\limits_{k = 1}^m {I\left( {{W_k}} \right)} = \sum\limits_{k = 1}^m {I\left( {{B_k};Y\left| {{B_1},{B_2}, \cdots ,{B_{k - 1}}} \right.} \right)} = I\left( {X;Y} \right),
\end{equation}
where $B_k$ denotes the random variable corresponding to $b_k$ in $b_1^m$. Accordingly, the MP preserves the capacity.

The second stage performs the conventional $N$-dimensional binary channel polarization transform ${\bf{G}}_N = {{\bf F}_2^{ \otimes n}}$ \cite{arikan,5G_NR_std} on each of these $m$ bit synthesized subchannels $W_k$, where $N = 2^n, n=1,2,\cdots$, ${{\bf{F}}_2} = \left[ { \begin{smallmatrix} 1 & 0 \\ 1 &  1 \end{smallmatrix} } \right]$, and ${\bf F}_2^{ \otimes n}$ stands for the $n$-th Kronecker power of ${{\bf{F}}_2}$. The resulting BMCs $\{ {W_{k}^{\left( i \right)}} \}$ with $i=1,2,\cdots,N$ and $k=1,2,\cdots,m$ are referred as the \emph{bit polarized subchannels}. The two-stage channel transform is shown in Fig.~\ref{channel_transform_figure}. Followed by \cite{arikan}, the transition probability of each bit polarized subchannel is written as
\begin{equation}\label{trans_prob_stage2}
  \begin{aligned}
& W_{k}^{\left( i \right)}\left( {y_1^N,u_1^{a - 1}\left| {{u_a}} \right.} \right) \\
& = \sum\limits_{u_{a + 1}^{kN} \in {{\left\{ {0,1} \right\}}^{N - i}}} {\left( {\frac{1}{{{2^{N - 1}}}}\prod\limits_{i' = 1}^N {{W_k}\left( {{y_{i'}},b_1^{k - 1}\left( {i'} \right)\left| {{b_k}\left( {i'} \right)} \right.} \right)} } \right)},
\end{aligned}
\end{equation}
where $a = \left( {k - 1} \right)N + i$, the bit vector $b_1^k\left( {i'} \right)$ corresponding to the $i'$-th symbol is $b_1^k\left( {i'} \right) \triangleq ( {{v_{i'}},{v_{i' + N}}, \cdots ,{v_{i' + \left( {m - 1} \right)N}}} )$, and $v_{\left( {k - 1} \right)N + 1}^{kN} = u_{\left( {k - 1} \right)N + 1}^{kN}{{\mathbf G}_N}$.

\begin{figure}[t]
\setlength{\abovecaptionskip}{0.cm}
\setlength{\belowcaptionskip}{-0.cm}
  \centering{\includegraphics[scale=0.89]{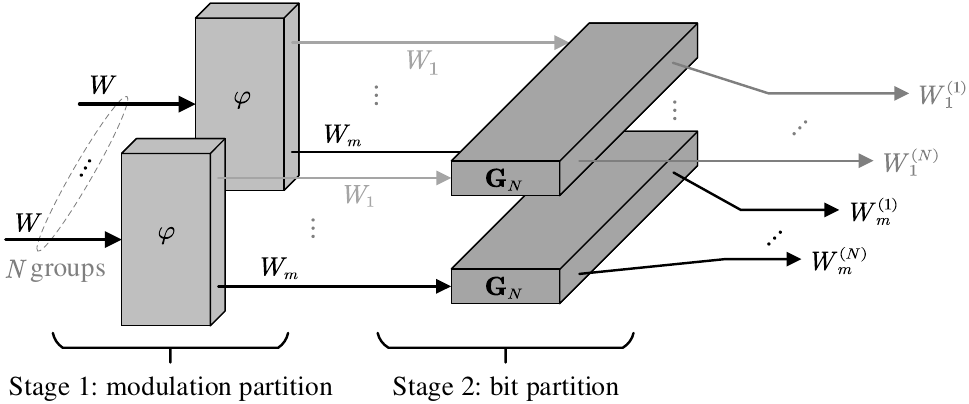}}
  \caption{The two-stage channel transform in MLC-PCM.}\label{channel_transform_figure}
\vspace{-1em}
\end{figure}

In order to construct a MLC-PCM system of rate $R$ and symbol block-length $N$ (i.e., the block-length of each component polar code) over the channel $W$, the $K$ most reliable indices among the $mN$ bit polarized subchannels are selected as the \emph{information} indices. The information indices corresponding to the $k$-th component polar code is denoted by ${\mathcal A}_{k}$ and the others are \emph{frozen} bits known by the transceiver, marked as ${{\mathcal A}_{k}^c} = \left[\kern-0.15em\left[ N \right]\kern-0.15em\right]\backslash {{\mathcal A}_{k}}$. The total number of information indices is $K =\sum\nolimits_{k = 1}^m {\left|{{\mathcal A}_{k}}\right|}$. Intuitively, the second stage channel transform can be viewed as constructing a set of $N$-length component polar codes on each of the $m$ bit subchannels in the set $\{ W_k \}$ with a total coding rate $R = \frac{K}{mN}$.

The widely-used code construction method of MLC-PCM is based on the DE/GA algorithm \cite{Polar_coded_MIMO_dai,MLC_wacha}. Given the channel parameters, it first calculates the capacities of bit subchannels under the first-stage MP, i.e.,
\begin{equation}\label{eq_capacity_W_syn}
  I\left( {{{ W}_k}} \right) = \sum\limits_{{b}_1^k \in {{\left\{ {0,1} \right\}}^k}} {\sum\limits_{y \in {\mathcal Y}} {\frac{ \Pr \left( {y\left| {{ b}_1^k} \right.} \right)}{{{2^k}}} \cdot {{\log }_2}\frac{{\Pr \left( {y\left| {{ b}_1^k} \right.} \right)}}{{\Pr \left( {y\left| {{ b}_1^{k - 1}} \right.} \right)}}} },
\end{equation}
where the transition probability is
\begin{equation}\label{eq_trans_prob}
  \Pr \left( {y\left| {{ b}_1^k} \right.} \right) = \frac{1}{{{2^{m - k}}}}\sum\limits_{{ b}_{k + 1}^m \in {{\left\{ {0,1} \right\}}^{m - k}}} { W\left( {y\left| {x = \varphi \left( {{ b}_1^m} \right)} \right.} \right)}.
\end{equation}
Then, each $W_k$ is approximated by a binary-input additive white Gaussian noise (BI-AWGN) channel ${W}_k'$ whose capacity is equal to $I\left( {{{ W}_k}} \right)$. The reliabilities of $mN$ bit polarized subchannels are computed by using surrogate channels $\{{W}_k'\}$ and the DE/GA algorithm \cite{Polar_coded_MIMO_dai,PCM_construction_wksp}. The information indices $\{{\mathcal A}_{k}\}$ are obtained by ranking all these bit polarized subchannels with their reliabilities.

%This online construction relies on the actual channel condition, whose main drawback is their high computational and sorting complexity and thus unacceptable for practical AMC systems

%with varying parameters such as modulation order $m$, coding rate $R$, and symbol block-length $N$.

\section{Progressive Rate-Filling Construction}\label{section_rate_filling}

In this section, we propose two rate-filling strategies for the fast and low-complexity construction of MLC-PCM. Note that each component polar code corresponds to one bit subchannel $W_k$, and thus their $N$ bit indices can be sorted by using the PW sequence or the Polar sequence in the 5G standard. There is no need to sort all the $mN$ bit polarized subchannels like that in the online DE/GA construction, and thus the main task is to compute their allocated rates, i.e., the number of information bits of each component polar code $\left| {{{\mathcal A}_{k}}} \right|$.

\subsection{Rate-Filling Based on Capacity}\label{subsection_RFI}

To find out the number of allocated information bits $K_k = \left| {{{\mathcal A}_{k}}} \right|$ for each component polar code, we propose a rate-filling method with $R_k = K_k/N$. Note that we know the total number of information bits $K$, the target transmission rate is $R_T \triangleq K/N = mR$ bits per symbol, which is also referred as the sum-rate of component polar codes. Hence, we can adjust the parameters of $W$, e.g., the SNR of $W$ under the AWGN channel, to find an equivalent channel $\overline W$ whose capacity is equal to the target sum-rate $R_T$, i.e.,
\begin{equation}\label{capacity_equivalent}
\begin{aligned}
 & {R_T} = I\left( {\overline W} \right) \Rightarrow \\
 & mR = \sum\limits_{x \in {\mathcal X}} {\sum\limits_{y \in {\mathcal Y}} {\frac{1}{{{2^m}}} \cdot \overline W\left( {y\left| x \right.} \right) \cdot {{\log }_2}\frac{{\overline W\left( {y\left| x \right.} \right)}}{{\sum\limits_{x' \in {\mathcal X}} {\frac{1}{{{2^m}}} \cdot \overline W} \left( {y\left| x' \right.} \right)}}} }.
\end{aligned}
\end{equation}
For this identical sum-rate approximated channel $\overline W$, we can also compute the bit subchannel capacities $\{ I\left( {{{\overline W}_k}} \right) \}$ under the MP as \eqref{eq_capacity_W_syn}.

\begin{proposition}\label{prop_capacity}
\emph{
  The rate-filling for each component polar code is performed as
  \begin{equation}\label{eq_prop_rate_filling_capacity}
    {R_k} \leftarrow mR \cdot \frac{{I\left( {{{\overline W}_k}} \right)}}{{I\left( {{{\overline W}}} \right)}} = {I\left( {{{\overline W}_k}} \right)}
  \end{equation}
  for $k = 1,2,\cdots,m$.
  }
\end{proposition}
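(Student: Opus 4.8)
The plan is to read \eqref{eq_prop_rate_filling_capacity} as two separate assertions—a proportional rate-filling rule and its simplification—and to prove them by combining the capacity-preservation property of the modulation partition with the defining property of the equivalent channel $\overline{W}$. The algebra is short, so my emphasis will fall on justifying why the allocation is well-posed and why it is consistent with the total rate budget.

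First I would record that $\overline{W}$ is well-defined. Because the symbol-channel capacity $I\left(\overline{W}\right)$ is a continuous and strictly increasing function of the underlying channel parameter (e.g., the SNR under AWGN) whose range is $\left(0,m\right)$, and because the target sum-rate obeys $mR \in \left(0,m\right)$ for any admissible $R \in \left(0,1\right)$, the intermediate value theorem supplies a channel parameter for which \eqref{capacity_equivalent} holds, i.e. $I\left(\overline{W}\right) = R_T = mR$.

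Next I would invoke the capacity-preservation identity of the MP established in \eqref{mutual_information}, applied to $\overline{W}$, which gives $\sum_{k=1}^m I\left(\overline{W}_k\right) = I\left(\overline{W}\right)$. The total rate budget $R_T$ is then distributed among the $m$ component codes in proportion to the information each bit subchannel carries under the MP, so that more reliable subchannels receive more information bits:
\begin{equation*}
R_k = R_T \cdot \frac{I\left(\overline{W}_k\right)}{\sum_{k'=1}^m I\left(\overline{W}_{k'}\right)} = R_T \cdot \frac{I\left(\overline{W}_k\right)}{I\left(\overline{W}\right)}.
\end{equation*}
Substituting the defining relation $R_T = mR = I\left(\overline{W}\right)$ into the denominator collapses the proportionality constant to unity and yields $R_k = I\left(\overline{W}_k\right)$, which is exactly \eqref{eq_prop_rate_filling_capacity}. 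As a consistency check I would sum over $k$ and use capacity preservation once more: $\sum_{k=1}^m R_k = \sum_{k=1}^m I\left(\overline{W}_k\right) = I\left(\overline{W}\right) = mR = R_T$, so the total number of information bits $K = N R_T$ is respected exactly.

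The genuine subtlety is not the substitution but the two structural facts it rests on: the existence of $\overline{W}$, which relies on monotonicity of the symbol-channel capacity in SNR, and the MP capacity-preservation \eqref{mutual_information}, which is precisely what forces the proportional allocation to sum to the correct total. The capacity-achieving claim as $N\to\infty$ would then follow as a separate step: each component polar code built at rate $I\left(\overline{W}_k\right)$ asymptotically achieves the capacity of its bit subchannel, and summing over $k$ drives the aggregate rate to $I\left(\overline{W}\right) = mR$.
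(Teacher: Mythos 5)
Your proposal is correct and follows essentially the same (largely definitional) route as the paper: the equality in \eqref{eq_prop_rate_filling_capacity} is immediate from the defining relation $I\left( {\overline W} \right) = mR$ in \eqref{capacity_equivalent}, and the consistency of the allocation with the total rate budget rests on the capacity preservation of the modulation partition in \eqref{mutual_information}, exactly as you argue. Your additional remarks on the well-posedness of $\overline W$ via monotonicity of capacity in SNR and the explicit check that $\sum_{k=1}^m R_k = R_T$ are sound elaborations of steps the paper leaves implicit.
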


A special note is that the code construction is intended for polar codes, which means all the source bits along with the attached bits, e.g., the cyclic-redundancy-check (CRC) bits \cite{CASCL_niukai} etc., should be counted in the number of information bits $K$. In practical AMC systems, this rate-filling is independent with block-length $N$, which enables the code construction to be robust to varying parameters. The channel condition, e.g., SNR, determines the code construction in an implicit way that selects the MCS, including $m$ and $R$, to align with the instantaneous channel state in AMC systems.

Next, we prove that the proposed rate-filling construction scheme is capacity-achieving. First, we give the definition of \emph{consistence property} for modulation channels.

\begin{definition}\label{definiton}
\emph{
  Given two modulation channels $W'$ and $W''$, suppose $W'$ is upgraded with respect to $W''$ \cite{rateless_libin}, i.e., $W' \succeq W''$ ($I(W') \ge I( {W''} )$), if we can derive that their split bit subchannels satisfy $W'_k \succeq W''_k$ for any $k=1,2,\cdots,m$, we define this as the \emph{consistence property}.
  }
\end{definition}

For example, given the modulation scheme and modulation order $m$, two AWGN modulation channels with different SNRs satisfy the consistence property.

\begin{theorem}
  For any modulation channel satisfying the consistence property, the proposed rate-filling strategy is capacity-achieving as the block-length $N$ goes to infinity.
\end{theorem}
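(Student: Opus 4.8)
The plan is to show that, for any target sum-rate $mR$ strictly below the channel capacity $I(W)$, the rate-filling of Proposition~\ref{prop_capacity} yields component codes whose total error probability vanishes while their aggregate rate tends to $mR$; letting $mR \uparrow I(W)$ then gives the capacity-achieving claim. First I would fix $mR < I(W)$ and recall that the surrogate $\overline W$ is defined by $I(\overline W)=mR$. Since $\overline W$ is obtained by lowering the channel quality (e.g., the SNR) until $I(\overline W)=mR<I(W)$, it is degraded with respect to the true channel, i.e., $W \succeq \overline W$.

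The next step is to push this ordering down to every level. By the consistence property (Definition~\ref{definiton}), $W \succeq \overline W$ implies $W_k \succeq \overline W_k$ for each $k=1,2,\cdots,m$, so that $I(\overline W_k) \le I(W_k)$ while $\sum_k I(\overline W_k)=I(\overline W)=mR$. I would then invoke the standard fact that channel upgradation is preserved by the binary polarization recursion ${\bf{G}}_{N}$: from $W_k \succeq \overline W_k$ it follows that $W_k^{(i)} \succeq \overline W_k^{(i)}$ for every $i$, and hence the Bhattacharyya parameters satisfy $Z(W_k^{(i)}) \le Z(\overline W_k^{(i)})$ for all $i,k$, where $Z(\cdot)$ denotes the Bhattacharyya parameter.

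With this in hand, the construction is analyzed on the surrogate and transferred to the true channel. For level $k$ I would take the information set $\mathcal A_k$ to be the polarized indices whose surrogate Bhattacharyya parameters satisfy $Z(\overline W_k^{(i)}) < 2^{-N^{\beta}}$ for some $\beta<1/2$. By Ar{\i}kan's polarization theorem together with the rate of polarization, $|\mathcal A_k|/N \to I(\overline W_k)=R_k$ and $\sum_{i\in\mathcal A_k} Z(\overline W_k^{(i)}) \le N\,2^{-N^{\beta}} \to 0$. Because $Z(W_k^{(i)})\le Z(\overline W_k^{(i)})$, the successive-cancellation block-error probability of component $k$ on the \emph{actual} bit subchannel obeys $P_{e,k} \le \sum_{i\in\mathcal A_k} Z(W_k^{(i)}) \le \sum_{i\in\mathcal A_k} Z(\overline W_k^{(i)}) \to 0$. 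A union bound over the $m$ levels in the multistage decoder gives total error $P_e \le \sum_{k=1}^{m} P_{e,k}\to 0$, while the aggregate rate $\sum_k |\mathcal A_k|/N \to \sum_k I(\overline W_k)=mR$. Since $mR$ was arbitrary below $I(W)$, the scheme achieves every rate below capacity, which is the assertion.

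The step I expect to be the main obstacle is the transfer from the surrogate to the true channel, i.e., establishing that the allocation $R_k=I(\overline W_k)$ — computed entirely from the sum-rate-matched surrogate and \emph{without} knowledge of the true channel — is simultaneously reliable on every true bit subchannel $W_k$. This is precisely where the consistence property is indispensable: it guarantees that matching the \emph{aggregate} capacity $I(\overline W)=mR$ forces the per-level ordering $W_k\succeq \overline W_k$, so that no level is starved of margin even though the backoff $I(W)-mR$ is distributed only implicitly. A secondary technical point is the boundary behavior at $R_k=I(\overline W_k)$; I would sidestep it by defining $\mathcal A_k$ through the threshold $2^{-N^{\beta}}$, so that $|\mathcal A_k|/N$ approaches $I(\overline W_k)$ from below for each finite $N$, and by identifying the fixed PW/Polar ranking with the polarization-optimal ordering in the large-$N$ limit.
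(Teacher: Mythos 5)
Your proposal is correct and follows essentially the same route as the paper's proof: define the surrogate $\overline W$ with $I(\overline W)=mR$, use the consistence property to push $W \succeq \overline W$ down to $W_k \succeq \overline W_k$ at every level, invoke the preservation of upgradation under the polarization transform (the paper's ``nesting property'') to conclude that the information sets chosen on the surrogate remain reliable on the true bit subchannels, and let the error probability vanish as $N\to\infty$. Your version is in fact somewhat more careful than the paper's, since you make the final step rigorous via Bhattacharyya parameters and the rate of polarization rather than appealing informally to ``all information-carrying subchannels have capacity near 1.''
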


\begin{proof}
  Applying the channel polarization theorem \cite{arikan} to MLC-PCM, for each $N$-length component polar code, the subchannels $\{ {W_{k,N}^{\left( i \right)}} \}$ polarize in the sense that, for any fix $\delta \in (0,1)$, as $N$ goes to infinity, the fraction of indices $i \in \left[\kern-0.15em\left[ N
 \right]\kern-0.15em\right]$ for which $I( {W_{k}^{\left( i \right)}} ) \in (1-\delta,1]$ goes to $I \left(W_k\right)$, and the fraction for which $I( {W_{k}^{\left( i \right)}} ) \in [0,\delta)$ goes to $1 - I \left(W_k\right)$. The good bit indices set ${\mathcal S}(W_k)$ for $W_k$ (i.e., those with systematic capacity near 1) satisfy $\mathop {\lim }\limits_{N \to  + \infty } \frac{{\left| {\mathcal S}(W_k) \right|}}{N} = I\left( {{W_k}} \right)$.

Given the total coding rate $R$ and the modulation scheme with order $m$ in MLC-PCM, the proposed rate-filling construction is performed as \eqref{eq_prop_rate_filling_capacity}. Under this modulation configuration, it is sufficient to prove that for any channel $W$ which is upgraded with respect to $\overline W$ in \eqref{capacity_equivalent}, i.e., $W \succeq {\overline W}$ ($I(W) \ge I( {\overline W} ) = mR$), suppose $W$ and ${\overline W}$ satisfy the consistence property, we have the system error probability $\epsilon \to 0$ as $N \to +\infty$. Since $W_k \succeq {\overline W}_k$ for any $k = 1,2,\cdots,m$, due to the \emph{nesting property} of polar codes \cite{rateless_libin}, it can be derived that good bit indices ${\mathcal S}({\overline W}_k)$ for ${\overline W}_k$ (i.e., those with systematic capacity near 1) must be a subset of good bit indices ${\mathcal S}(W_k)$ for $W_k$, i.e., ${\mathcal S}({\overline W}_k) \subset {\mathcal S}(W_k)$. Note that ${\mathcal S}({\overline W}_k)$ is indeed the selected information set ${\mathcal A}_k$ as $N$ goes to infinity, and thus all the information-carrying subchannels are of capacity 1 such that the system error probability $\epsilon \to 0$. In this sense, the capacity-achieving property of the proposed rate-filling construction is proven.
\end{proof}

In practical implementation, given the modulation and coding parameters, the proposed rate-filling scheme should strictly ensure the summation of allocated information bits of each component polar code equal to the total number of information bits, i.e., $\sum\nolimits_{k = 1}^m {{K_k}}  = K$. To this end, the proposed rate-filling is performed in a progressive manner which operates from the bit subchannel of the highest capacity to the one of the lowest capacity. The whole procedure is summarized as Algorithm \ref{rate_filling_algorithm}.

\begin{algorithm}[t]

\caption{Progressive rate-filling code construction}\label{rate_filling_algorithm}

Initialize the system parameters: modulation order $m$, symbol length (component code length) $N$, the total number of information bits $K$, symbol mapping rule $\varphi : {\left\{ 0,1\right\}}^{m}  \mapsto {\mathcal X}$\;

Find a channel $\overline W$ with the identical sum-rate as \eqref{capacity_equivalent}\;

Calculate the capacity of each bit subchannel $\{I({\overline W}_k)\}$\;

Sort the bit synthesized subchannels such that $I({\overline W}_{k_1}) \ge I({\overline W}_{k_2}) \ge \cdots \ge I({\overline W}_{k_m})$\;

\For{$t = 1,2,\cdots,m$}
{
    Compute the number of information bits for the $k_t$-th component polar code as
    \begin{equation*}
      {K_{{k_t}}} = \left\lceil \left( {K - \sum\nolimits_{t' = 1}^{t-1} {{K_{{k_{t'}}}}} } \right) \cdot \frac{{I\left( {{{\overline W}_{{k_t}}}} \right)}}{{\sum\nolimits_{t' = t}^m {I\left( {{{\overline W}_{{k_{t'}}}}} \right)} }} \right\rceil.
    \end{equation*}\\
    Determine the information set ${\mathcal A}_{k_t}$ according to $K_{k_t}$ and the Polar sequence in \cite[Table 5.3.1.2-1]{5G_NR_std}\;
}
		
\textbf{return} The information sets of each component polar code ${\mathcal A}_{1},{\mathcal A}_{2},\cdots,{\mathcal A}_{m}$.

\end{algorithm}

\subsection{Rate-Filling Based on Finite Block-Length Rate}\label{subsection_RFII}

Note that the above rate-filling scheme based on capacity does not take into account the block-length $N$ of each component polar code, this can well match most cases especially for large $N$. However, for some cases, the accuracy of rate-filling can be further improved by considering the rate with the finite block-length effect \cite{capacity_finite}. For the $k$-th component code, given the block-length $N$, the maximal rate achievable with error probability ${\epsilon}_k$ is closely approximated as
\begin{equation}\label{eq_finite_cap}
  M\left( {{W_k},N,{\epsilon_k}} \right) = I\left( {{W_k}} \right) - \sqrt {\frac{{{V_k}}}{N}} {Q^{ - 1}}\left( {{\epsilon_k}} \right),
\end{equation}
where $Q$ is the complementary Gaussian cumulative distribution function, and the channel dispersion $V_k$ is computed as
\begin{equation}
\begin{aligned}
  & V_k = \\
  & \sum\limits_{{b}_1^k \in {{\left\{ {0,1} \right\}}^k}} {\sum\limits_{y \in {\mathcal Y}} {\frac{\Pr \left( {y\left| {{ b}_1^k} \right.} \right)}{{{2^k}}} \cdot \left[{{\log }_2}\frac{{\Pr \left( {y\left| {{ b}_1^k} \right.} \right)}}{{\Pr \left( {y\left| {{ b}_1^{k - 1}} \right.} \right)}}\right]^2} } - I^2\left( {{{ W}_k}} \right),
\end{aligned}
\end{equation}
where $I\left( {{{ W}_k}} \right)$ and the transition probabilities are defined in \eqref{eq_capacity_W_syn} and \eqref{eq_trans_prob}. Given the target system error probability $\epsilon$, it is still difficult to determine the error probability $\epsilon_k$ of each component code. However, since the rate-filling process may allocate more information bits to component codes of higher capacity, the error probabilities $\epsilon_k$ tend to be close with each other. Therefore, we can approximate $\epsilon_k = {\overline \epsilon}$ for any $k=1,2,\cdots,m$, and given the target block-error ratio (BLER) $\epsilon$, we can derive that
\begin{equation}\label{eq_error_prob}
\begin{aligned}
  \epsilon = 1 - \prod\limits_{k = 1}^m {\left( {1 - {\epsilon_k}} \right)}  & \Rightarrow 1 - \epsilon = {\left( {1 - {\overline \epsilon}} \right)^m} \\
  ~ & \Rightarrow \epsilon_k = {\overline \epsilon} = 1 - {\left( {1 - \epsilon} \right)^{\frac{1}{m}}}.
\end{aligned}
\end{equation}
In this way, given the system target BLER $\epsilon$, one can compute the error
probability $\epsilon_k$ for each component code. To allocate the coding rates for MLC-PCM, we first adjust the parameters of $W$ to find a channel $\widetilde W$ whose achievable sum-rate at finite block-length $N$ is equal to $R_T$, i.e.,
\begin{equation}
  {R_T} = mR =  \sum\limits_{k = 1}^m {M\left( {{{\widetilde W}_k},N,{\epsilon_k}} \right)},
\end{equation}
where ${M( {{{\widetilde W}_k},N,{\epsilon_k}} )}$ is calculated as \eqref{eq_finite_cap} and the target BLER $\epsilon$ can be chosen with the practical requirements, e.g., $10^{-1}$ \cite{MIESM}, etc., such that $\epsilon_k$ is computed as \eqref{eq_error_prob}. By using the surrogate channel $\widetilde W$, the rate-filling procedure is given as \eqref{eq_prop_rate_filling_rate_finite}.

\begin{proposition}\label{prop_rate}
\emph{
  The rate-filling for each component polar code is performed as
  \begin{equation}\label{eq_prop_rate_filling_rate_finite}
    {R_k} \leftarrow mR \cdot \frac{{M\left( {{{\widetilde W}_k},N,{\epsilon_k}} \right)}}{\sum\limits_{k' = 1}^m {M\left( {{{\widetilde W}_{k'}},N,{\epsilon_{k'}}} \right)}} = {M\left( {{{\widetilde W}_k},N,{\epsilon_k}} \right)}
  \end{equation}
  for $k = 1,2,\cdots,m$.
  }
\end{proposition}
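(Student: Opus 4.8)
The plan is to mirror the short argument that underlies Proposition~\ref{prop_capacity}, exploiting the fact that the surrogate channel $\widetilde{W}$ has been defined precisely so as to absorb the normalization factor. First I would recall that, by construction, $\widetilde{W}$ is chosen as the channel whose achievable sum-rate at the finite block-length $N$ equals the target sum-rate, i.e. the identity $mR = R_T = \sum_{k=1}^m M( \widetilde{W}_k, N, \epsilon_k )$ from the display immediately preceding the proposition, with each $M(\widetilde{W}_k, N, \epsilon_k)$ given by \eqref{eq_finite_cap} and each per-component error probability $\epsilon_k$ obtained from \eqref{eq_error_prob} once the target BLER $\epsilon$ is fixed. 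This single equation is the crux, since it pins down the denominator of the rate-filling ratio.

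Next I would substitute this sum-rate identity into the right-hand side of \eqref{eq_prop_rate_filling_rate_finite}. Because $\sum_{k'=1}^m M( \widetilde{W}_{k'}, N, \epsilon_{k'} ) = mR$, the scaling factor $mR / \sum_{k'=1}^m M( \widetilde{W}_{k'}, N, \epsilon_{k'} )$ collapses to unity, whence $R_k = mR \cdot M( \widetilde{W}_k, N, \epsilon_k )/(mR) = M( \widetilde{W}_k, N, \epsilon_k )$, which is exactly the claimed equality. The progressive, ceiling-rounded implementation in Algorithm~\ref{rate_filling_algorithm} would then be the finite-block-length analogue of the capacity case, guaranteeing $\sum_{k=1}^m K_k = K$.

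The point that deserves genuine care, and where this proof departs from Proposition~\ref{prop_capacity}, is the existence and well-definedness of $\widetilde{W}$. In the capacity case the sum-rate preservation $\sum_{k=1}^m I( \overline{W}_k ) = I( \overline{W} )$ is automatic from \eqref{mutual_information}, so matching $R_T$ reduces to matching a single mutual information. Here, by contrast, there is no chain-rule identity for the penalized rate $M( \cdot , N, \epsilon_k )$; the equality $\sum_{k=1}^m M( \widetilde{W}_k, N, \epsilon_k ) = R_T$ must be imposed directly. I would therefore secure the existence of such a $\widetilde{W}$ by invoking the continuity and (for the regime of interest) monotonicity of the aggregate finite-block-length rate $\sum_{k=1}^m M( \widetilde{W}_k, N, \epsilon_k )$ in the underlying channel parameter, e.g. the SNR under AWGN. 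As the SNR sweeps its range this aggregate is continuous, is dominated by the monotone term $\sum_k I( \widetilde{W}_k )$ for large $N$ since the dispersion penalty scales as $1/\sqrt{N}$, and spans a range containing $R_T = mR \in (0,m)$, so by the intermediate value theorem it attains the value $R_T$ at some SNR, which defines $\widetilde{W}$. This monotonicity of the aggregate rate is the only nontrivial ingredient, the dispersion term being the source of any subtlety; once existence is established, the algebraic cancellation of the normalization factor completes the proof.
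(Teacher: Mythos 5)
Your proposal is correct and follows essentially the same route as the paper, which presents Proposition~\ref{prop_rate} as an immediate consequence of the defining equation ${R_T} = mR = \sum\nolimits_{k = 1}^m {M( {{{\widetilde W}_k},N,{\epsilon_k}} )}$ for the surrogate channel $\widetilde W$, so that the normalization factor cancels exactly as you describe. Your additional remarks on the existence of $\widetilde W$ via continuity and the intermediate value theorem, and on the absence of a chain-rule identity for $M(\cdot,N,\epsilon_k)$ in contrast to \eqref{mutual_information}, go beyond what the paper makes explicit but are consistent with its implicit assumption that the channel parameter (e.g., SNR) can always be adjusted to meet the target sum-rate.
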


Apparently, the proposed rate-filling scheme in \emph{Proposition \ref{prop_rate}} is an enhanced version of that in \emph{Proposition \ref{prop_capacity}}. It improves the accuracy of rate-filling with slightly increased computational complexity in \eqref{eq_finite_cap}. In practical implementation, the rate-filling should also be carried out in a progressive manner, and the whole procedure is similar to that in Algorithm \ref{rate_filling_algorithm} where the capacity terms $\{I( {{{\overline W}_k}} ) \}$ are replaced with the finite block-length rate terms $\{ {M( {{{\widetilde W}_k},N,{\epsilon_k}} )} \}$.

\section{Performance Evaluation}\label{section_performance}

We make a comprehensive performance evaluation of the proposed rate-filling code construction methods. The performance of MLC-PCM systems constructed by state-of-the-art GA methods \cite{Polar_coded_MIMO_dai} (i.e., the ``LM-DGA'' method in \cite{PCM_construction_wksp}) and that of the LDPC-coded modulation in the 5G standard \cite{5G_NR_std} are also provided as comparisons. For ease of exposition, the proposed rate-filling method based on capacity in Subsection.\ref{subsection_RFI} is abbreviated as ``RF-I'', and the proposed rate-filling method based on the finite block-length rate in Subsection.\ref{subsection_RFII} is abbreviated as ``RF-II''. In addition, the target BLER $\epsilon$ in \eqref{eq_error_prob} of the RF-II construction method is set to $10^{-1}$ for all the sequent simulation cases, which follows the practical BLER requirement in 4G and 5G wireless systems \cite{MIESM}. The CRC-aided successive cancellation list (CA-SCL) decoding \cite{CASCL_niukai} is used for MLC-PCM, where the list size is 32 and the 16-bit CRC sequence in the 5G standard \cite{5G_NR_std} is adopted. The sum-product algorithm (SPA) with 20 iterations and layered scheduling is used for LDPC decoding \cite{linshu}, which presents comparable complexity with polar decoding \cite{Polar_coded_MIMO_dai}.

\begin{figure}[t]
\setlength{\abovecaptionskip}{0.cm}
\setlength{\belowcaptionskip}{-0.cm}
  \centering{\includegraphics[scale=0.58]{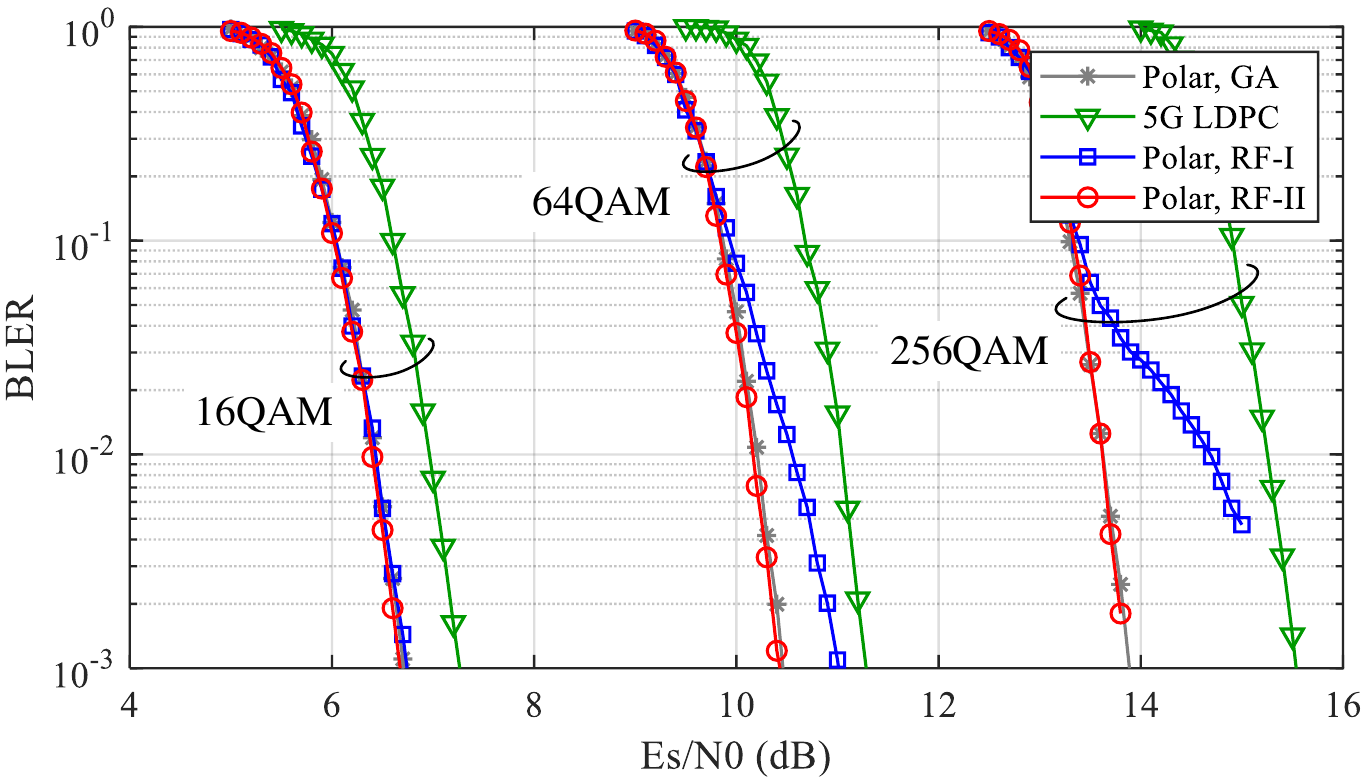}}
  \caption{BLER performance comparisons with $R = 0.5$ and $N = 512$.}\label{BLER_figure}
\vspace{-1em}
\end{figure}

\begin{figure*}[t]
\setlength{\abovecaptionskip}{0.cm}
\setlength{\belowcaptionskip}{-0.cm}
  \centering{\includegraphics[scale=0.45]{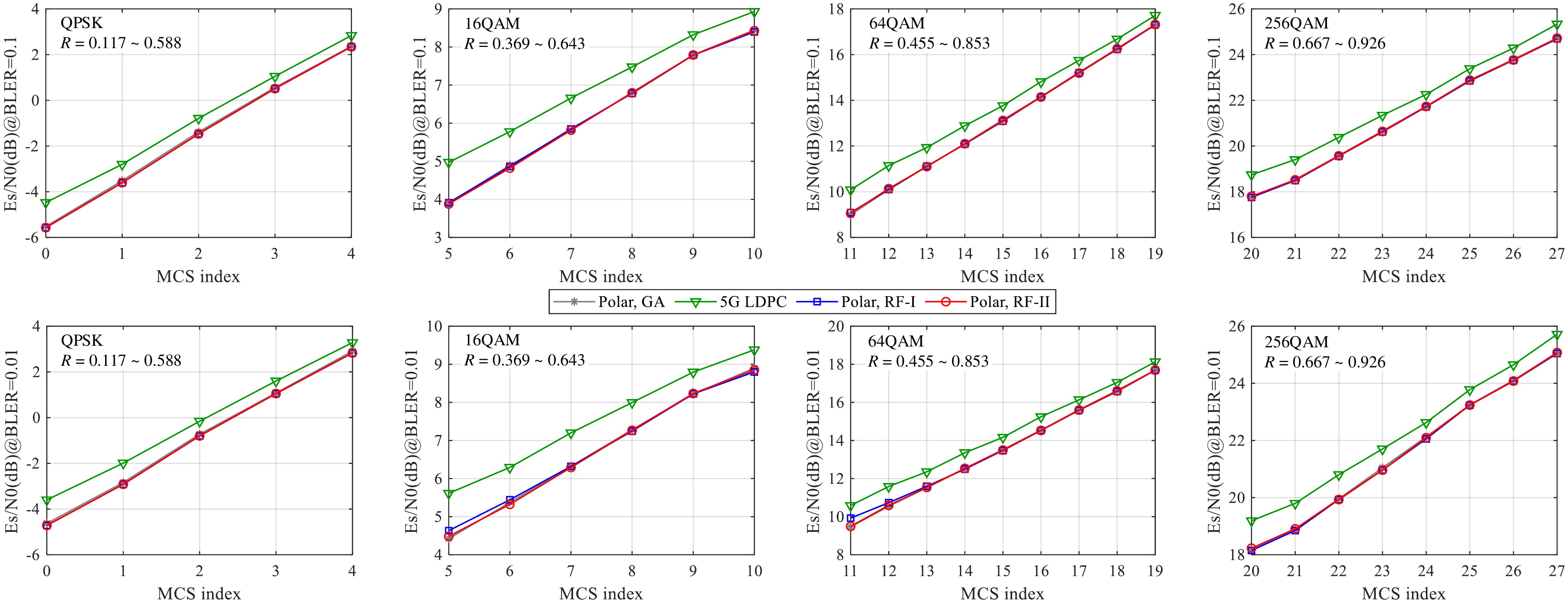}}
  \caption{The minimum required SNR to achieve ${\text{BLER}} = 10^{-1}~{\text{or}}~10^{-2}$ under the AWGN channel, where the symbol block-length $N=256$ and the MCS follows the 5G standard \cite[Table 5.1.3.1-2]{5G_NR_std_214}.}\label{MCS_figure}
\vspace{-1em}
\end{figure*}

In Fig. \ref{BLER_figure}, we compare the BLER performance of various coded modulation schemes under the AWGN channel. For all cases, the code rate is $R = 0.5$ and the symbol block-length is set to $N = 512$ which corresponds to component block-length in the MLC-PCM. The modulation scheme employs the quadrature amplitude modulation (QAM) \cite{5G_NR_std}. Given the modulation order of QAM $m \in \{4,6,8\}$, the block-length of LDPC code is $mN$. It should be noted that the online GA construction in MLC-PCM is executed individually for each evaluated SNR value, but the proposed rate-filling construction method is an one-shot operation that is independent with the evaluated SNR.

Clearly, the MLC-PCM constructed by RF-I can achieve almost identical performance as the GA construction under 16QAM. With modulation order increasing, the performance of RF-I becomes worse. That is because the number of component polar codes increase with the modulation order, and thus the accuracy of RF-I without considering the block-length decreases for high-order modulation schemes, e.g., 64QAM and 256QAM, etc. However, the performance of MLC-PCM constructed by RF-II aligns well with the GA construction for diverse modulation orders, which validates the effect of considering finite block-length. Therefore, the proposed RF-II construction method is more robust with a slight increase of computation complexity in \eqref{eq_finite_cap}. Compared to the LDPC-coded modulation in the 5G standard \cite{5G_NR_std}, the MLC-PCM shows stable performance gain.

In Fig. \ref{MCS_figure}, we provide the minimum required SNR to achieve ${\text{BLER}} = 10^{-1}~{\text{or}}~10^{-2}$ under the AWGN channel with the symbol block-length $N = 256$. The employed MCS (including modulation order $m$ and code rate $R$) follows the 5G standard \cite[Table 5.1.3.1-2]{5G_NR_std_214}, and the corresponding code rate range is marked in each subfigure. When ${\text{BLER}} = 10^{-1}$, both the RF-I and the RF-II align well with the GA construction, which outperforms the 5G LDPC-coded modulation schemes. When ${\text{BLER}} = 10^{-2}$, the MLC-PCM constructed by the RF-I shows some tiny loss at MCS-5 and MCS-11 while the RF-II still performs well for all cases. Note that even for high modulation orders, the RF-I results under the MCS in the 5G standard \cite{5G_NR_std_214} do not show obvious loss like that in Fig. \ref{BLER_figure}. That indicates the robustness of the proposed rate-filling methods for practical configurations as the 5G standard.

\begin{figure}[t]
\setlength{\abovecaptionskip}{0.cm}
\setlength{\belowcaptionskip}{-0.cm}
  \centering{\includegraphics[scale=0.58]{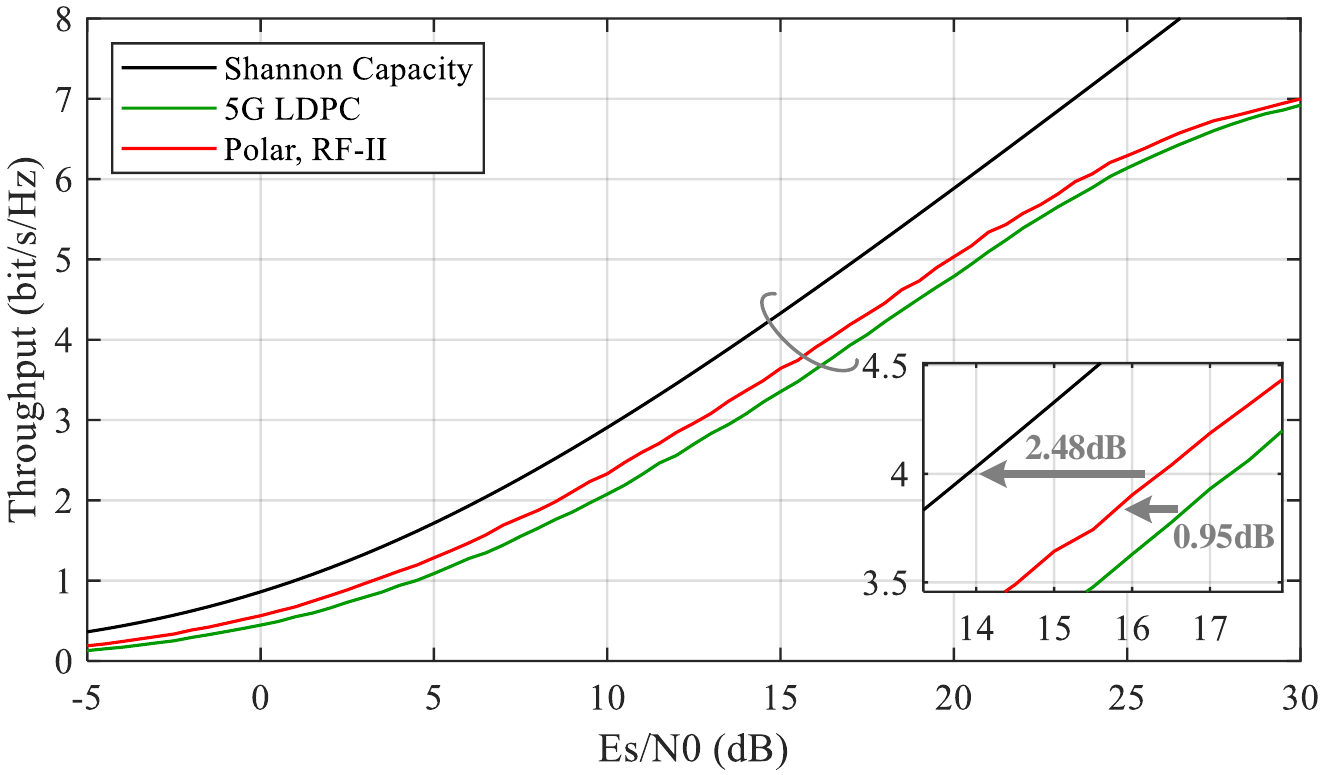}}
  \caption{Link throughput comparison under the block-fading channel.}\label{TP_figure}
\vspace{-0em}
\end{figure}

Fig. \ref{TP_figure} shows the link throughput comparison result under the block-fading channel, where the symbol block-length is $N = 256$. The AMC mechanism adaptively selects MCS to maximize the link throughput under BLER constraint $\le 10^{-1}$ \cite{MIESM}, which is indeed employed in the 5G NR system. Clearly, the MLC-PCM constructed by the proposed RF-II method shows stable gain with respect to the LDPC-coded modulation scheme in the 5G standard \cite{5G_NR_std}. That validates the agility and robustness of the proposed code construction method.

Regarding the construction complexity, it is straightforward to see that each component code is of $O\left(1\right)$ complexity as in 5G NR so that the whole progressive procedure involves $O\left(m\right)$ complexity with no sorting operation, which is much lower than the $O\left(mN\right)$ complexity of the GA method \cite{GA}.

\section{Conclusion}\label{section_conclusion}

In this letter, we propose two progressive rate-filling methods to realize the agile construction of MLC-PCM. Simulation results show that the proposed construction method is robust to diverse modulation and coding schemes.

\ifCLASSOPTIONcaptionsoff
  \newpage
\fi

\end{document}